\definecolor{DarkGray}{rgb}{0.1,0.1,0.5}
\newcommand{\eqnref}[1]{\hyperref[#1]{{(\ref*{#1})}}}
\newcommand{\thmref}[1]{\hyperref[#1]{{Theorem~\ref*{#1}}}}
\newcommand{\lemref}[1]{\hyperref[#1]{{Lemma~\ref*{#1}}}}
\newcommand{\corref}[1]{\hyperref[#1]{{Corollary~\ref*{#1}}}}
\newcommand{\defref}[1]{\hyperref[#1]{{Definition~\ref*{#1}}}}
\newcommand{\secref}[1]{\hyperref[#1]{{Section~\ref*{#1}}}}
\newcommand{\figref}[1]{\hyperref[#1]{{Figure~\ref*{#1}}}}
\newcommand{\tabref}[1]{\hyperref[#1]{{Table~\ref*{#1}}}}
\newcommand{\remref}[1]{\hyperref[#1]{{Remark~\ref*{#1}}}}
\newcommand{\appref}[1]{\hyperref[#1]{{Appendix~\ref*{#1}}}}
\newcommand{\claimref}[1]{\hyperref[#1]{{Claim~\ref*{#1}}}}
\newcommand{\exampleref}[1]{\hyperref[#1]{{Example~\ref*{#1}}}}
\newcommand{\comment}[1]{\emph{\color{blue}Comment:\color{black} #1}} 
\newlength{\commentslength}
\newcommand{\comments}[1]{
\hspace{-2\parindent}
\addtolength{\commentslength}{-\commentslength}
\addtolength{\commentslength}{\linewidth}
\addtolength{\commentslength}{-\parindent}
\fcolorbox{blue}{white}{\smallskip\begin{minipage}[c]{\commentslength}
\emph{Comments:}\begin{itemize}#1\end{itemize}\end{minipage}}\bigskip
}
\renewcommand{\comment}[1]{}\renewcommand{\comments}[1]{}
\newcommand{\ket}[1]{\left| #1\right\rangle}      
\newcommand{\bra}[1]{\left\langle #1\right|}
\newcommand{\abs}[1]{\left|#1 \right|}			
\newcommand{\N}{\mathbb{N}}
\newtheorem{theorem}{Theorem}[section]
\newtheorem{lemma}[theorem]{Lemma}
\def\beq{\begin{equation}}
\def\eeq{\end{equation}}
\newcommand{\fastmatrix}[1]{\left(\begin{smallmatrix}#1\end{smallmatrix}\right)}
\begin{document}

\title{Systematic distillation of composite Fibonacci anyons\\ using one mobile quasiparticle}
\author{Ben W.~Reichardt \\ \small Department of Electrical Engineering, University of Southern California}
\date{}

\maketitle

\begin{abstract}
A topological quantum computer should allow intrinsically fault-tolerant quantum computation, but there remains uncertainty about how such a computer can be implemented.  It is known that topological quantum computation can be implemented with limited quasiparticle braiding capabilities, in fact using only a single mobile quasiparticle, if the system can be properly initialized by measurements.  It is also known that measurements alone suffice without any braiding, provided that the measurement devices can be dynamically created and modified.  We study a model in which both measurement and braiding capabilities are limited.  Given the ability to pull nontrivial Fibonacci anyon pairs from the vacuum with a certain success probability, we show how to simulate universal quantum computation by braiding one quasiparticle and with only one measurement, to read out the result.  The difficulty lies in initializing the system.  We give a systematic construction of a family of braid sequences that initialize to arbitrary accuracy nontrivial composite anyons.  Instead of using the Solovay-Kitaev theorem, the sequences are based on a quantum algorithm for convergent search.  
\end{abstract}

\section{Introduction}

In a topological quantum computer, universal quantum computation can be simulated by braiding quasiparticle excitations around each other on a two-dimensional surface~\cite{Kitaev97anyons}.  Provided the quasiparticles are kept well apart, the computation is intrinsically highly resilient to errors.  Topological quantum computers can be implemented, or simulated, in a variety of ways~\cite{NayakSimonSternFreedmanDasSarma08tqcreview}.  For example, we might find lattice spin systems in which certain necessary local interactions arise naturally~\cite{Kitaev97anyons, Kitaev05hexagon}, or the interactions could be artificially engineered~\cite{LevinWen05stringnet}.  Physical quantum systems that may more directly allow for universal topological quantum computation include fractional quantum Hall systems~\cite{NayakSimonSternFreedmanDasSarma08tqcreview} and topological insulators with topologically nontrivial surfaces~\cite{BondersonDasSarmaFreedmanNayak10insulator, HasanKane10topologicalinsulators}.  

There has been steady progress in fabricating these substrates and then studying their quantum properties to verify conjectured theories.  However, engineering a topological quantum computer based on such a system remains a formidable challenge.  A direction of theoretical research has therefore been to design topological quantum computational schemes that minimize the required resources.  For example, moving quasiparticles appears to be difficult---the technology is speculative~\cite{FreedmanNayakWalker05Ising}.  Simon et al.~\cite{SimonBonesteelFreedmanPetrovicHormozi05onemobile} showed how to implement universal quantum computation by weaving only a single mobile quasiparticle through an array of stationary quasiparticles.  Bonderson, Freedman and Nayak~\cite{BondersonFreedmanNayak08measurementonlytqcprl, BondersonFreedmanNayak08measurementonlytqc} developed a scheme that requires no quasiparticle braiding.  Instead, their approach relies only on certain collective anyon measurements to teleport anyons around each other.  They envision using dynamic deformations of the fractional quantum Hall medium in order to place interferometers around the anyons to be measured.  Limited deformations have been experimentally achieved~\cite{WillettPfeifferWest08fqhe}, but a full implementation of these measurements may be as difficult as braiding quasiparticles.  For example, in the Ising model, it may be difficult to calibrate the interferometers to distinguish trivial charge from charge $\psi$~\cite{BondersonFreedmanNayak08measurementonlytqc}.  

In this paper, we consider the case in which both measurement and quasiparticle braiding operations are difficult.  In a model meant as a compromise between references~\cite{SimonBonesteelFreedmanPetrovicHormozi05onemobile} and~\cite{BondersonFreedmanNayak08measurementonlytqcprl}, we allow for weaving only a single mobile quasiparticle, and also restrict measurement to the fusion channel of only one anyon pair.  This measurement is used exactly once at the very end of the computation, to read out the result.  However, limited measurement capability makes initializing the system a problem.  It disallows the standard approach of preparing anyons: trapping stray quasiparticles, and then measuring interferometrically to check for the presence of a nontrivial anyon charge and break entanglement inside the system.  Therefore our model instead supposes that nontrivial quasiparticle pairs can be created from the vacuum each with a constant probability above zero, independently.  

Quasiparticles that stochastically may or may not be trivial are not directly useful for computation.  To separate this entropy and create effective pure states, we use the method of composite anyon distillation, introduced in~\cite{Koenig09distillation}.  In composite anyon distillation, sketched in \figref{f:models}, a collection of unentangled quasiparticles, each of which has a certain probability independently of being nontrivial, is manipulated in order to create a composite anyon that with high probability is nontrivial.  In further computation, the collection of quasiparticles is then treated as a single entity.  (If the physical model allows it, they can be fused together.)  Our contribution is in giving a scheme for distilling composite anyons by weaving only a single mobile quasiparticle.  Once composite anyons can be created with high probability, they can be manipulated by braiding the single quasiparticle to implement the desired computation following~\cite{SimonBonesteelFreedmanPetrovicHormozi05onemobile}, and finally the outcome of the computation can be read out using a measurement.  

\begin{figure}
\centering
\raisebox{.1in}{
\def\halfcircle#1#2#3#4#5{\draw [#5] (#1,#3) -- (#1,#4) -- (#2,#4) -- (#2,#3);}
\def\starx#1{\node at (#1,.15) {$\star$};}
\def\mylabel#1{\text{(#1)}\!\!\!\!\!\!\!\!\!\!\!\!\!\!\!\!}
\begin{tikzpicture}[scale=1,baseline=-17.5pt]
\halfcircle{-1.5}{-.5}{0}{-.5}{rounded corners=2.6ex}
\starx{-.5};
\halfcircle{-.1}{.9}{0}{-.5}{rounded corners=2.6ex}
\halfcircle{1.3}{2.3}{0}{-.5}{rounded corners=2.6ex,dashed}
\halfcircle{2.7}{3.7}{0}{-.5}{rounded corners=2.6ex}
\halfcircle{4.1}{5.1}{0}{-.5}{rounded corners=2.6ex,dashed}
\draw [thick,loosely dotted] (2.5,.3) -- (2.5,-1.5);
\end{tikzpicture}
$\quad\longrightarrow$
\raisebox{.1in}{
\def\halfcircle#1#2#3#4#5{\draw [#5] (#1,#3) -- (#1,#4) -- (#2,#4) -- (#2,#3);}
\def\starx#1{\node at (#1,.15) {$\star$};}
\def\mylabel#1{\text{(#1)}\!\!\!\!\!\!\!\!\!\!\!\!\!\!\!\!}
\begin{tikzpicture}[scale=1,baseline=-10pt]
\halfcircle{-1.5}{-.5}{0}{-.5}{rounded corners=2.6ex}
\starx{-.5};
\halfcircle{0}{1}{0}{-.5}{rounded corners=2.6ex}
\halfcircle{1.5}{2.5}{0}{-.5}{rounded corners=2.6ex,dashed}
\halfcircle{3}{4}{0}{-.5}{rounded corners=2.6ex}
\halfcircle{4.5}{5.5}{0}{-.5}{rounded corners=2.6ex,dashed}
\draw [rounded corners=2.3ex,dashed] (2,-.5) -- (2,-1) -- (1.25,-1);
\draw [rounded corners=1.5ex] (.5,-.5) -- (.5,-1) -- (1.25,-1) -- (1.25,-1.5) -- (2.75,-1.5) -- (4.25,-1.5) -- (4.25,-1) -- (3.5,-1) -- (3.5,-.5);
\draw [rounded corners=2.3ex,dashed] (5,-.5) -- (5,-1) -- (4.25,-1);
\draw [thick,loosely dotted] (2.75,.3) -- (2.75,-1.5);
\end{tikzpicture}
}}
\caption{Composite anyon distillation: For initializing the anyon system, we allow a basic operation that attempts to pull from the vacuum a nontrivial anyon pair.  It succeeds with probability $p > 0$, but the success or failure (solid or dashed lines, respectively) is not revealed.  By weaving a single, nontrivial mobile quasiparticle (marked $\star$) around the others, we distill across a partition (dotted line) a composite anyon that is nontrivial with high probability.  Entropy is unchanged by this unitary operation, but remains localized in the higher portion of the fusion diagram.} \label{f:models}
\end{figure}

Besides extending K{\"o}nig's distillation scheme to the case of a single mobile quasiparticle, we also improve the efficiency of the scheme.  In particular, in the original model, in which nontrivial anyon pairs can be created with probability $p$, K{\"o}nig's method uses $O\big( \frac 1 {p^2} (\log \frac 1 \epsilon)^{5+\delta}\big)$ physical braids to distill a composite anyon with probability $1-\epsilon$, where $\delta > 0$ can be any constant.  The improved implementation uses only $O\big( \frac 1 {p^2} (\log \frac 1 \epsilon)^3 \big)$ physical braids for the same accuracy.  This improvement is made possible by giving a more efficient braid sequence for achieving the same operations.  

Although the Fibonacci model is universal, it does not allow the exact implementation of every desired operation~\cite{FreedmanWang06Fibexact}.  Instead, desired operations must be approximated by sequences of available gates each corresponding to an elementary quasiparticle braiding operation.  
Braid sequences are generally found by brute-force search, sometimes after using dimensionality-reduction ideas to simplify the search for two-qubit operations~\cite{BonesteelHormoziZikosSimon05compile, HormoziZikosBonesteelSimon07compile, HormoziBonesteelSimon09SU2k, XuWan08compile2, XuWan08compile, XuTaylor10compile}.  After the initial search achieves a certain constant accuracy, an appeal is made to the Solovay-Kitaev theorem~\cite{KitaevShenVyalyi02text, DawsonNielsen05solovaykitaev} to constructively derive arbitrarily accurate braids, with the number of braids required to achieve an error $\epsilon$ scaling as $O\big( (\log \frac 1 \epsilon)^{3+\delta} \big)$, where $\delta > 0$ can be any constant.  There are indications that the Solovay-Kitaev theorem may be too pessimistic for the Fibonacci model~\cite{Mosseri08fib, BurrelloXuMussardoWan09hashing}, but no stronger convergence guarantee is yet known~\cite{HarrowRechtChuang01SolovayKitaev}.  

Instead of relying on search and the Solovay-Kitaev theorem, we give a completely explicit distillation braid sequence for which the error can be analyzed exactly.  A systematic, iterative procedure allows for achieving arbitrarily small error.  The number of braids required to realize an error $\epsilon > 0$ scales as only $O(\log \frac 1 \epsilon)$.  A brute-force search can be used to initialize the procedure, possibly improving the hidden constant in the big-$O$ notation, but such a search is unnecessary.  

This more efficient gate compilation procedure is inspired by Grover's convergent search quantum algorithms~\cite{Grover05search}.  This algorithm applies phases of $\pi/3$ to the source and target states, instead of $\pi$ as in Grover's original search algorithm~\cite{Grover97search}.  Unlike the original algorithm, it does not give a square-root speedup for unstructured database search.  However, when run iteratively, it is a convergent procedure, that converges to the target, instead of rotating past it.  Previous work has used this to design higher-order-accurate composite pulse sequences for qubit control~\cite{ReichardtGrover05}.  Here we apply generalizations of this technique to design braid sequences for controlling Fibonacci anyons.  The construction takes advantage of two properties.  First, we do not need to approximate arbitrary unitaries.  In fact, distillation can be reduced to implementing certain $2 \times 2$ matrices, either the identity or the Pauli $X$ matrix, acting between certain anyon fusion basis states.  Second, braids in the Fibonacci model allow for easily applying phases that are multiples of $\pi/5$.  Although the phase-$\pi/3$ convergent search algorithm does not apply, a phase-$\pi/5$ generalization does.  

Regarding possible applications, there are at least two major caveats to our approach.  First, as in~\cite{Koenig09distillation}, our scheme works for the Fibonacci anyon theory, also known as $SO(3)_3$.  The Fibonacci theory allows for universal quantum computation, and is the non-abelian part of one of the candidates for  the anyon model for fractional quantum Hall liquids at filling fraction $\nu = 12/5$~\cite{RezayiRead09fqhe}.   However, the form of the $\nu = 12/5$ state has not been resolved~\cite{BondersonFeiguinMollerSlingerland09BS}.  The Ising anyon theory, related to $SU(2)_2$, may be more easily accessible in experiments, likely appearing for example in the less fragile $\nu = 5/2$ fractional quantum Hall state and possibly in topological insulators.  Unfortunately, composite anyon distillation is impossible for the Ising model in the plane, since allowed quantum operations can be simulated by Clifford gates~\cite{NayakSimonSternFreedmanDasSarma08tqcreview, BondersonClarkeNayakShtengel09Isinguniversal}.  In any case, the Fibonacci model is worth studying as it is the simplest non-abelian anyon model, and has other possible realizations beyond fractional quantum Hall systems~\cite{LevinWen05stringnet, KoenigKuperbergReichardt10TVcode, CooperWilkinGunn01rotatingbosetqc}.  In the conclusion, we will discuss a way of extending composite anyon distillation to anyon models $SO(3)_k$ where $k+2$ is prime.  

A second caveat is that the operations we allow may not be the most suitable operations for a particular implementation of the Fibonacci theory.  While interferometric measurements have been implemented in fractional quantum Hall systems~\cite{WillettPfeifferWest08fqhe, WillettPfeifferWest09measurement}, braiding and creation of particle pairs from the vacuum have not.  Thus it remains unclear what set of operations will be most experimentally accessible.  The current work may be seen as exploring one alternative.  As experiments advance, we should obtain a better understanding of the advantages and disadvantages of different approaches.  For example, the presence of static stray quasiparticles could pose a problem both for a measurement-based topological quantum computation scheme, since the different interferometry regions must enclose the same sets of quasiparticles~\cite[Sec.~6.4]{BondersonFreedmanNayak08measurementonlytqc}, and also for our scheme, since then regions of anyons that should fuse to the vacuum might not.  In the conclusion, we will discuss a possible solution, should this turn out to be an issue: set up a single interferometer with one trap inside the interferometry region, and then braid a mobile quasiparticle in and out of this trap, and around other traps, in order to measure the anyons at other positions.  

\smallskip
This paper is organized as follows.  \secref{s:systematic} begins by stating the simple matrix identity that is the basis for our systematic construction of higher-order-accurate braid sequences.  The identity is a generalization of Grover's convergent search algorithm.  \secref{s:fibonacci} briefly reviews the parameters of the Fibonacci anyon model.  \secref{s:distillationbybraiding} considers composite anyon distillation in the case that all quasiparticles can be moved for braiding, and presents a completely explicit braid sequence that is more efficient than the sequence given for the same model in~\cite{Koenig09distillation}.  \secref{s:distillationbyweaving} shows how to distill composite anyons in the one-mobile-quasiparticle model.  Unlike the schemes in \secref{s:distillationbybraiding} and~\cite{Koenig09distillation}, the presented method does not use hierarchical distillation, and is thus even simpler in certain ways.  However, the limited quasiparticle mobility also introduces some technical problems.  Finally, \secref{s:conclusion} concludes with a discussion of some extensions and open problems.

\section{Systematic construction of higher-order-accurate \texorpdfstring{$2 \times 2$}{2x2} unitaries} \label{s:systematic}

The explicit braid sequences we will derive are based on Grover's convergent search algorithm, which is a variation of the well-known amplitude amplification algorithm.  

The basic matrix identity behind the amplitude amplification algorithm~\cite{BrassardHoyerTapp98amplitudeamplify, Grover98amplitudeamplify} is that for any $2 \times 2$ unitary matrix $U$, the $(1,1)$ entry of 
\begin{equation}
A(U) = U \fastmatrix{-1&0\\0&1} U^\dagger \fastmatrix{-1&0\\0&1} U
\end{equation}
equals in magnitude $\cos (3 \arccos \abs{U_{1,1}})$.  Think of $U$ as changing basis, from certain states $\ket s, \ket {s^\perp}$ to $\ket t, \ket {t^\perp}$, with $\abs{\bra t U \ket s}$ small.  Then $A(U)$ uses three calls to $U$ or $U^\dagger$ to amplify the coefficient for going from the ``source" $\ket s$ to the ``target" $\ket t$ by about a factor of three: $\abs{\bra t A(U) \ket s} = \abs{\sin(3 \arcsin \abs{\bra t U \ket s})}$.  The probability of measuring the target is increased by about a factor of nine.  Iterating this procedure by implementing $A(A(U))$, $A(A(A(U)))$, etc., results in the well-known square-root speedup, used for example in Grover's unstructured database search algorithm~\cite{Grover97search, Grover02search}.  

One potential problem in amplitude amplification or Grover's search algorithm is that running the procedure for too long results in the output state turning beyond the target.  Grover's construction of a convergent search algorithm addresses this issue~\cite{Grover05search, Hoyer05piby3}.  In convergent search, the basic iteration 
\begin{equation}
U \longmapsto A'(U) = U \fastmatrix{e^{\pi i/3}&0\\0&1} U^\dagger \fastmatrix{e^{\pi i/3}&0\\0&1} U
\end{equation}
adds phases of $\pi/3$ instead of $\pi$ to the source and target.  It satisfies $\abs{\bra {t^\perp} A'(U) \ket s} = \abs{\bra {t^\perp} U \ket s}^3$.  Hence iterating the map results in the $(1,1)$ coefficient converging to $1$ in magnitude.  Although this algorithm does not give a square-root speedup over classical search, because of its coherency it has proved useful as a subroutine in other quantum algorithms~\cite{WocjanAbeyesinghe08, WocjanChiangAbeyesingheNagaj08, PoulinWocjan09sampling}.  The convergence property is also naturally applied to correct systematic control errors~\cite{ReichardtGrover05}.  

The following lemma, remarked in~\cite{ReichardtGrover05}, generalizes the convergent search algorithm to give fifth-order accuracy based on phase rotations by multiples of $\pi/5$ in between five alternating applications of $U$ and $U^\dagger$.  \figref{f:anglesintuition} gives some geometrical intuition for the choice of angles.  

\begin{figure}
\centering
\begin{tikzpicture}[scale=3,baseline=0pt,>=stealth]
\begin{scope}[color=gray]
\draw[dashed] (0,0) -- (60:1);
\draw[dashed] (0,0) -- (120:1);
\end{scope}
\draw[->] (0,0) -- +(1,0);
\draw[->] (1,0) arc(0:60:1);
\draw[->] (60:1) -- +(-1,0);
\draw[->] (120:1) arc(120:180:1);
\draw[->] (-1,0) -- (0,0);
\node [below] at (0,0) {$\ket 0$};
\node [above] at (.5,0) {$U$};
\node [above] at (-.5,0) {$U$};
\node [above] at (0,.866025) {$U^\dagger$};
\node [above right] at (30:1) {$\pi/3$};
\node [above left] at (150:1) {$\pi/3$};
\node [below] at (0,-.43) {\small (a)};
\end{tikzpicture}
$\qquad\quad$
\begin{tikzpicture}[scale=3,baseline=0pt,>=stealth]
\def\onepertau{0.618033989} 
\begin{scope}[color=gray]
\draw[dashed] (0,0) -- (36:1);
\draw[dashed] (0,0) -- (72:\onepertau);
\draw[dashed] (0,0) -- (108:\onepertau);
\draw[dashed] (0,0) -- (144:1);
\draw[dashed] (0,0) -- (-36:\onepertau);
\draw[dashed] (0,0) -- (-144:\onepertau);
\end{scope}
\draw[->] (0,0) -- +(1,0);
\draw[->] (1,0) arc(0:36:1);
\draw[->] (36:1) -- +(-1,0);
\draw[->] (106:\onepertau) arc(106:216:\onepertau);
\draw[->] (216:\onepertau) -- +(1,0); 
\draw[<-] (0,0) -- +(-1,0);
\draw[<-] (-1,0) arc(180:144:1);
\draw[<-] (143:1) -- +(1,0);
\draw[<-] (72:\onepertau) arc(72:-36:\onepertau);
\node [below] at (0,0) {$\ket 0$};
\node [above] at (.45,0) {$U$};
\node [above] at (-.4,.588) {$U^\dagger$};
\node [above right] at (15:1) {$\pi/5$};
\node [below] at (0,-.43) {\small (b)};
\end{tikzpicture}
\caption{Geometrical intuition for convergent search~\cite{ReichardtGrover05}.  These two diagrams show stereographic projections of the Bloch sphere in a small neighborhood of the origin, $\ket 0$.  (a) Beginning at~$\ket 0$, applying the small rotation $U$ (right arrow), a $\pi/3$ rotation about $\ket 0$, $U^\dagger$ (left arrow), another $\pi/3$ rotation, and finally $U$, the state returns to its initial position up to third order.  Part (b) shows a similar geometrical argument for the third-order convergence of the sequence of Eq.~\eqnref{e:Iconverge}, using five applications of $U$ or $U^\dagger$.  By correcting for the curvature of the sphere, one can show that this sequence in fact gives fifth-order convergence (\lemref{t:converge}).} \label{f:anglesintuition}
\end{figure}
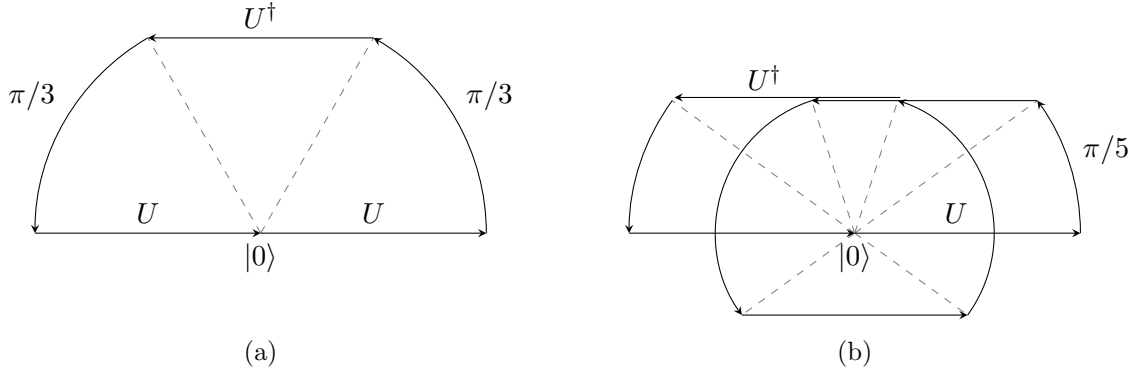

\begin{lemma} \label{t:converge}
Let $U \in U(2)$ be a one-qubit unitary, i.e., a $2 \times 2$ unitary matrix in the orthonormal basis $\{ \ket 0, \ket 1 \}$.  Let $\omega = e^{\pi i / 5}$.  
Then 
\begin{equation} \label{e:Iconverge}
\abs{\bra 1 
U \fastmatrix{1&0\\0&\omega} U^\dagger \fastmatrix{1&0\\0&-\omega^{-2}} U \fastmatrix{1&0\\0&-\omega^{-2}} U^\dagger \fastmatrix{1&0\\0&\omega} U
 \ket 0} = \abs{\bra 1 U \ket 0}^5
 \enspace ,
\end{equation}
and furthermore, 
\begin{equation} \label{e:Xconverge}
\abs{\bra 0 
U \fastmatrix{1&0\\0&\omega^{-1}} U^\dagger \fastmatrix{1&0\\0&-\omega^{-2}} U \fastmatrix{1&0\\0&-\omega^2} U^\dagger \fastmatrix{1&0\\0&\omega} U
 \ket 0} = \abs{\bra 0 U \ket 0}^5
 \enspace .
\end{equation}
\end{lemma}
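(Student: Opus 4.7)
The plan is to verify both identities by reducing them to polynomial identities in two real parameters and then confirming those identities directly.

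First, I would reduce to the case where $U$ is a real rotation. Any $U \in U(2)$ admits a decomposition $U = L R M$ with $L, M$ diagonal unitary and $R = \fastmatrix{c & -s \\ s & c}$, where $c = |U_{0,0}|$ and $s = |U_{1,0}|$. The phase gates $P_j := \fastmatrix{1 & 0 \\ 0 & e^{i\alpha_j}}$ appearing in the lemma are diagonal, so they commute with $L$ and $M$. A short bookkeeping argument (repeatedly using that a diagonal unitary commutes with any other diagonal matrix) then gives
\begin{equation*}
U P_1 U^\dagger P_2 U P_3 U^\dagger P_4 U \;=\; L \cdot X \cdot M, \qquad X := R P_1 R^T P_2 R P_3 R^T P_4 R.
\end{equation*}
Since $L$ and $M$ are diagonal they preserve the modulus of every entry of $X$, so the lemma reduces to showing $|X_{1,0}| = s^5$ (for the phases in \eqnref{e:Iconverge}) and $|X_{0,0}| = c^5$ (for those in \eqnref{e:Xconverge}).

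Next, I would expand $X$ using the rank-one identity $R P_j R^T = I + (e^{i\alpha_j}-1)\,(R\ket 1)(R\ket 1)^\dagger$ and its analogue for $R^T P_j R$. The fivefold product becomes a sum of at most $2^4$ monomials; each is a product of cyclotomic phases times a chain of inner products drawn from $\{\pm s,\, \pm c,\, \pm sc,\, s^2,\, c^2\}$, hence a monomial in $c, s$ of degree at most five with coefficients in $\mathbb Z[\omega]$. Substituting the specific phase values $\omega = e^{i\pi/5}$ and $-\omega^{\pm 2} = e^{\pm 3i\pi/5}$, I expect every term of degree less than five in the relevant variable to cancel once $c^2 + s^2 = 1$ is imposed. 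The cancellations ultimately rest on $\omega^5 = -1$ and $1 + \omega^2 + \omega^4 + \omega^6 + \omega^8 = 0$, together with the golden-ratio identity $2\cos(\pi/5) - 2\cos(2\pi/5) = 1$ that governs the quadratic combinations such as $|1 - \omega|^2$.

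The main obstacle is the bookkeeping: although each step is routine, tracking sixteen cyclotomic monomials through a fivefold matrix product leaves ample room for sign errors, and coaxing the degree-five terms into a clean fifth power requires some patience. A more geometric shortcut is suggested by \figref{f:anglesintuition}: interpret each $P_j$ and each $R P_j R^T$ as a rotation of the Bloch sphere, and compose the five rotations via quaternion multiplication in $SU(2)$. From this viewpoint, the lemma states that the curvature corrections to the naive flat-sphere cancellation sketched in the figure are themselves cancelled by the specific choice of angles $\pi/5$ and $3\pi/5$, yielding fifth-order rather than third-order agreement. Either approach reduces the proof to a symbolic computation in the cyclotomic field $\mathbb Q(\omega)$, which can then be checked by hand or by computer algebra.
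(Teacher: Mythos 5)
Your proposal follows essentially the same route as the paper: decompose $U$ into diagonal phase matrices times a real $2\times 2$ matrix, observe that the diagonals cancel across the alternating $U$/$U^\dagger$ pattern because they commute with the phase gates, and then verify the resulting identity for the real matrix by direct expansion (a computation the paper likewise states without carrying out). The only substantive difference is that the paper derives Eq.~\eqnref{e:Xconverge} from Eq.~\eqnref{e:Iconverge} via the substitution $U \rightarrow XU$, $\omega \rightarrow \omega^*$ rather than by a second expansion --- a small shortcut that would halve your bookkeeping.
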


\begin{proof}
Without loss of generality, we may assume that $U = \fastmatrix{\cos \theta & \sin \theta \\ \sin \theta & -\cos \theta}$ for some angle $\theta$.  Indeed, a general $2 \times 2$ unitary matrix can be written $\fastmatrix{c e^{i \theta_{00}} & s e^{i \theta_{01}} \\ s e^{i \theta_{10}} & c e^{i \theta_{11}}}$, where $(c, s) = (\cos\theta, \sin\theta)$ for some $\theta$, and $\theta_{00} - \theta_{01} = \theta_{10} - \theta_{11} + \pi \mod 2\pi$.  Thus we obtain 
\begin{equation}
\fastmatrix{c e^{i \theta_{00}} & s e^{i \theta_{01}} \\ s e^{i \theta_{10}} & c e^{i \theta_{11}}}
= 
\fastmatrix{1 & 0 \\ 0 & e^{i(\theta_{10}-\theta_{00})}}
\fastmatrix{c & s \\ s & -c}
\fastmatrix{e^{i \theta_{00}} & 0 \\ 0 & e^{i \theta_{01}}}
 \enspace .
\end{equation}
Since terms $U$ and $U^\dagger$ alternate in the matrix product on the left-hand side of Eq.~\eqnref{e:Iconverge}, the diagonal matrices to the left and right of $\fastmatrix{c & s \\ s & -c}$ above all cancel out, except for irrelevant phases at the very beginning and very end.  Eq.~\eqnref{e:Iconverge} follows from writing out the matrix product.  Eq.~\eqnref{e:Xconverge} follows from making the substitutions $U \rightarrow X U$ and $\omega \rightarrow \omega^*$ in Eq.~\eqnref{e:Iconverge}.  
\end{proof}

We remark that further generalizations of convergent search are possible.  Tulsi, Grover and Patel~\cite{TulsiGroverPatel05search, GroverTulsiPatel06search} defined higher-order-accurate generalizations with a more complicated structure, using extra ancilla states.  The most straightforward generalization, though, is for $k \in \N$ to consider the product $Q_k U^{(-1)^k} P_k$, where $P_j$ and $Q_j$ are defined inductively by $P_0 = Q_0 = I$ and 
\begin{equation}\begin{split}
P_{j+1} &= \fastmatrix{1&0\\0&(-1)^j \omega^{(-1)^j (j+1)}} U^{(-1)^j} P_j \\
Q_{j+1} &= Q_j U^{(-1)^j} \fastmatrix{1&0\\0&(-1)^j \omega^{(-1)^j (j+1)}}  \enspace ,
\end{split}\end{equation}
with $\omega = e^{\pi i / (2k+1)}$.  This product appears to give accuracy to order $2k+1$, which would be optimal by a polynomial degree argument~\cite{ChakrabortyRadhakrishnanRaghunathan05search}.  

For deriving braid sequences in the Fibonacci anyon model, these further generalizations are unnecessary, because the available phases introduced by a single braid include only multiples of~$\pi/5$.  However, the generalizations may be of interest for compiling braid sequences in anyon models $SO(3)_k$ or $SU(2)_k$ for odd $k > 3$.  Such models make available angles that are multiples of $\pi / (k+2)$.

\section{The Fibonacci anyon model} \label{s:fibonacci}

The Fibonacci anyon model~\cite{FreedmanLarsenWang02Fib, TrebstTroyerWangLudwig09fib, Bonderson07thesis}, also known as $SO(3)_3$, is perhaps the simplest nonabelian anyon model.  It is specified as follows.  

There are two particle types, $0$ the trivial particle type, and $1$ (sometimes called, respectively $1$ and $\tau$, or $I$ and $\epsilon$).  The only nontrivial fusion rule is $1 \otimes 1 = 0 \oplus 1$.  

The braid matrix is trivial, except for braiding two $1$ anyons.  The effect of a counterclockwise exchange of two $1$ anyons depends on their fusion $b \in \{0,1\}$, and is given in the $0, 1$ basis by 
\begin{equation} \label{e:Rdef}
R := \fastmatrix{e^{-4 \pi i/5} & 0 \\ 0 & e^{3 \pi i/5}}
 \enspace .
\end{equation}
Diagrammatically, we have 
\begin{equation}
\def\paircoordinates{\coordinate (A) at (-.3,0);\coordinate (B) at (.3,0);\coordinate (AB) at (0,-.4);\coordinate (C) at (0,-.75);}
\begin{tikzpicture}[scale=1,baseline=-13pt]
\paircoordinates
\draw [out=-30,in=30,looseness=1.8] (A) to (AB);
\draw [draw=white,double=black,double distance=.4pt,very thick,out=-150,in=150,looseness=1.8] (B) to (AB);
\draw (AB) -- (C);
\node at (.2,-.6) {$b$};
\end{tikzpicture}
=
\bra b R \ket b
\begin{tikzpicture}[scale=1,baseline=-13pt]
\paircoordinates
\draw [out=-80,in=150,looseness=1] (A) to (AB);
\draw [out=-100,in=30,looseness=1] (B) to (AB);
\draw (AB) -- (C);
\node at (.2,-.6) {$b$};
\end{tikzpicture}
\end{equation}
where the convention is that unlabeled edges are $1$ anyons.  

The $F$ matrix, describing the associativity of fusion, is trivial except for the case of three $1$ anyons fusing into a $1$ anyon.  In this case, it relates the bases 
\begin{equation}
\def\threecoordinates{\coordinate (A) at (-.5,0);\coordinate (B) at (0,0);\coordinate (C) at (.5,0);\
\coordinate (AB) at (-.25,-.25);\coordinate (BC) at (.25,-.25);\
\coordinate (ABC) at (0,-.5);\coordinate (D) at (0,-.75);}
\begin{tikzpicture}[scale=1,baseline=-13pt]
\threecoordinates
\draw (A) -- (AB) -- (B);
\draw (AB) -- (ABC) -- (C);
\draw (ABC) -- (D);
\node at (-.3,-.5) {$b$};
\end{tikzpicture}
=
\sum_{b' \in \{0,1\}}
\bra {b'} F \ket b
\begin{tikzpicture}[scale=1,baseline=-13pt]
\threecoordinates
\draw (B) -- (BC) -- (C);
\draw (A) -- (ABC) -- (BC);
\draw (ABC) -- (D);
\node at (.3,-.5) {$b'$};
\end{tikzpicture}
\end{equation}
where 
\begin{equation}
F := \fastmatrix{\frac 1 \tau & \frac 1 {\sqrt \tau} \\ \frac 1 {\sqrt \tau} & -\frac 1 \tau}
 \enspace ,
\end{equation}
and $\tau := (1 + \sqrt 5)/2$ is the golden ratio.  
The Frobenius-Schur indicators for the two particle types are both $\chi_0 = \chi_1 = 1$.

\section{Systematic distillation of composite Fibonacci anyons} \label{s:distillationbybraiding}

\newcommand{\Fib}{\mathrm{Fib}}
\def\twopaircoordinates{\coordinate (A) at (-1.5,0);\coordinate (B) at (-.5,0);\coordinate (C) at (.5,0);\coordinate (D) at (1.5,0);\coordinate (AB) at (-1,-.5);\coordinate (CD) at (1,-.5);\coordinate (BC) at (0, -.5);\coordinate (AD) at (0,-1);}
\def\AB{\draw [out=-90,in=180] (A) to (AB) to [out=0,in=-90] (B);}
\def\CD{\draw [out=-90,in=180] (C) to (CD) to [out=0,in=-90] (D);}
\def\BC{\draw [out=-90,in=180] (B) to (BC) to [out=0,in=-90] (C);}
\def\AD{\draw [out=-90,in=180] (A) to (AD) to [out=0,in=-90] (D);}
\def\ABCD{\draw [-] (AB) to [bend right=90,looseness=1] (CD);}
\def\ADBC{\draw [-] (AD) -- (BC);}

\subsection{Creating a nontrivial composite anyon pair from two pairs of anyons, at least one of which is nontrivial}

Consider two pairs of anyons, each pair fusing to $0$.  We aim to apply a braid sequence that creates a composite $1$ anyon crossing from the left pair to the right pair, provided that at least one of the pairs is initially nontrivial.  That is, we want to implement the map 
\begin{equation} \label{e:distillation}
\begin{tikzpicture}[scale=1,baseline=-15pt]
\twopaircoordinates \AB \CD
\node at (-1.7,-.3) {$a_1$};
\node at (.3,-.3) {$a_2$};
\end{tikzpicture}
\quad
\longmapsto
\quad
\begin{cases}
\quad
\tikz[baseline=-15pt]{\twopaircoordinates \AB \CD \ABCD}
&
\text{if $a_1 = a_2 = 1$} \\
\quad
\tikz[baseline=-15pt]{\twopaircoordinates \AB \CD \ABCD
\node at (-1.2,-.2) {$a_1$};
\node at (-.3,-.2) {$a_2$};
\node at (.8,-.2) {$a_1$};
\node at (1.7,-.2) {$a_2$};
\node at (0,-.8) {$a_1+a_2$};
}
&
\text{otherwise}
\end{cases}
\end{equation}
up to phases depending on $a_1$ and $a_2$.  It is easy to generate the second part of the map, for the case $(a_1, a_2) \neq (1, 1)$; simply swap the middle two particles.  However, the case $a_1 = a_2 = 1$ is more complicated, and we will achieve it only in the limit.  

The feasibility of achieving Eq.~\eqnref{e:distillation} asymptotically is argued in~\cite{Koenig09distillation}.  It can be seen as follows.  An arbitrary state of four $1$ particles fusing to the vacuum can be written as a superposition of diagrams of the form 
\begin{equation} \label{e:twopairbasis1}
\begin{tikzpicture}[scale=1,baseline=-10pt]
\twopaircoordinates
\AB \CD \ABCD
\node at (0,-.8) {$b$};
\end{tikzpicture}
\end{equation}
with $b \in \{0,1\}$.  Recall that a pure braid is one in which each of the four particles returns to its initial position.  It is known that when $a_1 = a_2 = 1$, the pure braids generate, up to global phases, a dense subgroup of the $2 \times 2$ unitaries $U(2)$, acting on the qubit $b$~\cite{SimonBonesteelFreedmanPetrovicHormozi05onemobile, FreedmanLarsenWang02Fib, FLW_dense}.  Therefore, for any $\epsilon > 0$, we can find a pure braid that implements Eq.~\eqnref{e:distillation} for the case $a_1 = a_2 = 1$ except for a swap of the middle two particles, up to error $\epsilon$ and up to phases.  Being pure, this braid can apply only a phase for the cases $(a_1, a_2) \neq (1, 1)$.  Finally, swapping particles two and three fixes the case $(a_1, a_2) = (1, 1)$, and also guarantees that there will be a composite $1$ particle crossing from left to right in the cases $(1, 0)$ and $(0, 1)$.  While this is only an existence argument, the Solovay-Kitaev theorem~\cite{KitaevShenVyalyi02text, DawsonNielsen05solovaykitaev} gives an algorithm for constructing the desired braid sequence.  For an error $\epsilon$ and any constant $\delta > 0$, the length of the braid sequence will be $O\big( (\log \frac 1 \epsilon)^{3+\delta} \big)$.  

We will argue that the same accuracy can be achieved using only $O(\log \frac 1 \epsilon)$ braid moves, and will specify the braid moves explicitly.  

Start by considering the case of two $1$ anyon pairs, $a_1 = a_2 = 1$.  In the basis of Eq.~\eqnref{e:twopairbasis1}, the effect of a counterclockwise braid of the first two anyons is given by $R$, while the effect of a braid of the middle two anyons is given by $S := F R F$.  Indeed, $F$ changes into the basis 
\begin{equation}
\begin{tikzpicture}[scale=1,baseline=-10pt]
\twopaircoordinates
\AD \BC \ADBC
\node at (.2,-.7) {$b$};
\end{tikzpicture}
\end{equation}
in which $R$ is a braid of the middle two anyons, and then $F^\dagger = F$ returns to the original basis.  

Let $M_0 = S$, and define braid sequences $M_1, M_2, \ldots$ inductively by 
\begin{equation}
M_j = M_{j-1} R^{-1} M_{j-1}^\dagger R^3 M_{j-1} R^{-3} M_{j-1}^\dagger R M_{j-1}
 \enspace .
\end{equation}
By Eq.~\eqnref{e:Xconverge}, 
\begin{equation}
\abs{\bra 0 M_j \ket 0} = \abs{\bra 0 M_{j-1} \ket 0}^5 = \cdots = \abs{\bra 0 M_0 \ket 0}^{5^j}
 \enspace .
\end{equation}

A calculation shows that $\bra 1 S \ket 0 \neq 0$.  Therefore the error $\abs{\bra 0 M_j \ket 0}$ in the sequence $M_j$ converges to $0$ doubly exponentially fast in $j$.  To be precise, since $\bra 0 S \ket 0 = e^{4 \pi i/5}/\tau$, $\abs{\bra 0 M_j \ket 0} = 1/\tau^{5^j}$.  On the other hand, the length $\ell_j$ of the $j$th braid sequence grows exponentially.  It satisfies  
\begin{equation}\begin{split}
\ell_0 &= 1 \\
\ell_j &= 5 \ell_{j-1} + 8 = 3 \cdot 5^j - 2
 \enspace .
\end{split}\end{equation}
Overall, therefore the error drops at a rate exponential in the length of the braid sequence; achieving error $\epsilon$ requires $O(\log \frac 1 \epsilon)$ braids.  Note that that this convergence rate is faster than that generically guaranteed by the Solovay-Kitaev theorem.  

However, so far we have only considered the case $a_1 = a_2 = 1$ in Eq.~\eqnref{e:distillation}.  We need to verify that the sequence also works in the other cases, i.e., $(a_1, a_2) \in \{ (0,1), (1,0) \}$.  (If initially both anyon pairs are trivial, then obviously the braid sequence has no effect.)  Let $\sigma_0 \in S_4$ be the permutation~$(23)$, and inductively define $\sigma_j$ as the four-particle permutation implemented by $M_j$: 
\begin{equation}
\sigma_j = \sigma_{j-1} (12) \sigma_{j-1}^{-1} (12) \sigma_{j-1} (12) \sigma_{j-1}^{-1} (12) \sigma_{j-1}
 \enspace .
\end{equation}
This recursion is periodic, and the solution alternates between the swaps $(23)$ and $(13)$: 
\begin{equation}
\sigma_j = \begin{cases} (23) & \text{for $j$ even} \\ (13) & \text{for $j$ odd$\enspace .$} \end{cases}
\end{equation}
With either of these swaps, nontrivial particles end up on opposite sides of the left/right partition, and hence the braid sequences always create a composite $1$ anyon, satisfying Eq.~\eqnref{e:distillation} exactly, up to a phase.

\subsection{Hierarchical recursion to create a composite anyon with high probability} \label{s:asymptoticsallmobile}

Now assume that each pair of quasiparticles begins in an independent mixture $(\Pr[0], \Pr[1])$ of $0$ and $1$ anyons, with $\Pr[1] \geq p$.  Since we have satisfied Eq.~\eqnref{e:distillation} in all four cases, up to error $\epsilon$ in the $a_1 = a_2 = 1$ case, we find that the probability that the braid sequence generates a composite $1$ anyon is at least 
\begin{equation}
2p(1-p) + p^2(1-\epsilon) = 1 - (1-p)^2 - \epsilon p^2
 \enspace .
\end{equation}

As in~\cite{Koenig09distillation}, we can repeat the entire procedure on pairs of composite anyons.  That is, start with four pairs of quasiparticles, apply the above braid sequences to the first two pairs and the last two pairs, and then apply the same braids to the composite anyon pairs.  Iterate this procedure.  A composite anyon will be created provided that at least one of the underlying quasiparticle pairs is nontrivial.  Therefore, if our aim is to create a composite $1$ anyon except with probability $\epsilon$, it is necessary and sufficient to use $n = \Theta(\frac 1 p \log \frac 1 \epsilon)$ underlying quasiparticle pairs.  

The total number of braid operations on physical or composite anyons is $O(n \log \frac n \epsilon)$, i.e., $O(\log \frac n \epsilon)$ braids at each level, to satisfy Eq.~\eqnref{e:distillation} up to error $\epsilon / n$, times $\frac n 2 + \frac n 4 + \frac n 8 + \cdots + 1$, as there are $n/2^k$ composite pairs at iteration level $k$.  However, implementing a single braid of two composite anyons requires multiple physical braid operations, quadratic in the number of physical quasiparticles comprising the composite anyons.  Expanding out the composite anyon braids, the total number of physical braid operations is $O\big( \sum_k \frac n {2^k} (2^{k-1})^2 \log \frac n \epsilon \big) = O(n^2 \log \frac n \epsilon)$.  This simplifies to $O\big( \frac 1 {p^2} (\log \frac 1 \epsilon)^3 \big)$, provided that $p = \Omega(\epsilon \log \frac 1 \epsilon)$---in fact, in applications $p$ is typically a constant, while $\epsilon$ is polynomially small.  Under the same condition, the distillation scheme of~\cite{Koenig09distillation} requires, for any $\delta > 0$, $O\big(n^2 (\log \frac 1 \epsilon)^{3+\delta} \big) = O\big( \frac 1 {p^2} (\log \frac 1 \epsilon)^{5+\delta}\big)$ physical braids for the same accuracy.  If anyon fusion is allowed, by bringing particles together, the respective complexities of the two schemes are $O\big( \frac 1 p (\log \frac 1 \epsilon)^2 \big)$ and $O\big( \frac 1 p (\log \frac 1 \epsilon)^{4+\delta} \big)$.

\section{Systematic distillation of composite Fibonacci anyons using one mobile quasiparticle} \label{s:distillationbyweaving}

Simon et al.~\cite{SimonBonesteelFreedmanPetrovicHormozi05onemobile} considered the question of whether it is possible to achieve universal quantum computation using Fibonacci anyons under the assumption that only one of the physical quasiparticles can be moved.  This may be a reasonable experimental constraint.  

In this section, we study whether composite anyons can be distilled with high probability if there is only one mobile quasiparticle.  This is clearly impossible if the mobile quasiparticle itself is trivial.  Therefore let us study the case where the mobile quasiparticle is promised to be nontrivial.  

Assume that we are given the following two operations, implemented up to phases to arbitrary accuracy by moving only the mobile quasiparticle marked $\star$: 
\begin{align} \label{e:addition}
\tikz[baseline=-10pt]{\twopaircoordinates \AD \BC \node at (1.5,.15) {$\star$};}
\quad\longmapsto\quad
\tikz[baseline=-10pt]{\twopaircoordinates \AD \BC \ADBC \node at (1.5,.15) {$\star$};}
\\ \label{e:integration}
\tikz[baseline=-10pt]{\twopaircoordinates \AB \CD \ABCD \node at (1.5,.15) {$\star$};}
\quad\longmapsto\quad
\tikz[baseline=-10pt]{\twopaircoordinates \AD \BC \ADBC \node at (1.5,.15) {$\star$};}
\end{align}

Distillation can then be achieved as follows.  Prepare $2 n$ anyon pairs, each pair fusing to the vacuum.   Assume that each pair of anyons begins in an independent mixture $(\Pr[0], \Pr[1])$ of $0$ and $1$ anyons, with $\Pr[1] \geq p$.  Consider the case that there is at least one nontrivial anyon pair among the first $n$ pairs, and another nontrivial pair among the last $n$.  This occurs with probability at least $(1 - (1-p)^n)^2$, which for $n = m/p$ is at least $1 - 2/e^m$.  

As sketched in \figref{f:onemobiledistillation}, now add using Eq.~\eqnref{e:addition} all the nontrivial anyon pairs on the left, integrating them using Eq.~\eqnref{e:integration} pairwise.  Then do the same for the right, resulting in a state like the one shown in \figref{f:onemobiledistillation}(d).  Note that these steps do not require knowing which anyon pairs are trivial or nontrivial; if a pair is trivial, then the braids through and around it have no effect.  Next, integrate the edge from the left with that from the right with Eq.~\eqnref{e:integration}, and apply the inverse braid sequence of Eq.~\eqnref{e:addition}.  Overall, provided that there is initially at least one nontrivial pair on both sides, this procedure results in the creation of a nontrivial composite anyon across the left/right partition, and restores the mobile quasiparticle to its initial position, unentangled with the rest of the system, shown in \figref{f:onemobiledistillation}(f).  

\begin{figure}
\def\halfcircle#1#2#3#4#5{\draw [#5] (#1,#3) -- (#1,#4) -- (#2,#4) -- (#2,#3);}
\def\starx#1{\node at (#1,.30) {$\star$};}
\def\mylabel#1{\text{(#1)}\!\!\!\!\!\!\!\!\!\!\!\!\!\!\!\!}
\begin{align*}
&\mylabel{a}&
&\tikz[baseline=-15pt,scale=.5]{
\halfcircle{-9.5}{9.5}{0}{-2}{rounded corners=5ex}
\halfcircle{-8.5}{-6.5}{0}{-1}{rounded corners=3ex}
\halfcircle{-5.5}{-3.5}{0}{-1}{rounded corners=3ex}
\halfcircle{-2.5}{-.5}{0}{-1}{rounded corners=3ex}
\halfcircle{.5}{2.5}{0}{-1}{rounded corners=3ex,dashed}
\halfcircle{3.5}{5.5}{0}{-1}{rounded corners=3ex}
\halfcircle{6.5}{8.5}{0}{-1}{rounded corners=3ex,dashed}
\starx{9.5};
\draw [thick,loosely dotted] (0,.3) -- (0,-1.5);
}\\
&\mylabel{b}&\longmapsto\quad
&\tikz[baseline=-15pt,scale=.5]{
\halfcircle{-9.5}{9.5}{0}{-2}{rounded corners=5ex}
\halfcircle{-8.5}{-6.5}{0}{-1}{rounded corners=3ex}
\halfcircle{-5.5}{-3.5}{0}{-1}{rounded corners=3ex}
\halfcircle{-2.5}{-.5}{0}{-1}{rounded corners=3ex}
\halfcircle{.5}{2.5}{0}{-1}{rounded corners=3ex,dashed}
\halfcircle{3.5}{5.5}{0}{-1}{rounded corners=3ex}
\halfcircle{6.5}{8.5}{0}{-1}{rounded corners=3ex,dashed}
\draw (-7.5,-1) -- (-7.5,-2); \draw (-4.5,-1) -- (-4.5,-2);
\starx{9.5};
\draw [thick,loosely dotted] (0,.3) -- (0,-1.5);
}\\
&\mylabel{c}&\longmapsto\quad
&\tikz[baseline=-22.5pt,scale=.5]{
\halfcircle{-9.5}{9.5}{0}{-3}{rounded corners=6ex}
\halfcircle{-8.5}{-6.5}{0}{-1}{rounded corners=3ex}
\halfcircle{-5.5}{-3.5}{0}{-1}{rounded corners=3ex}
\halfcircle{-2.5}{-.5}{0}{-1}{rounded corners=3ex}
\halfcircle{.5}{2.5}{0}{-1}{rounded corners=3ex,dashed}
\halfcircle{3.5}{5.5}{0}{-1}{rounded corners=3ex}
\halfcircle{6.5}{8.5}{0}{-1}{rounded corners=3ex,dashed}
\halfcircle{-7.5}{-4.5}{-1}{-2}{rounded corners=3ex} \draw (-6,-2) -- (-6,-3);
\starx{9.5};
\draw [thick,loosely dotted] (0,.3) -- (0,-2);
}\\
&\mylabel{d}&\longmapsto\quad
&\tikz[baseline=-25pt,scale=.5]{
\halfcircle{-9.5}{9.5}{0}{-4}{rounded corners=6ex}
\halfcircle{-8.5}{-6.5}{0}{-1}{rounded corners=3ex}
\halfcircle{-5.5}{-3.5}{0}{-1}{rounded corners=3ex}
\halfcircle{-2.5}{-.5}{0}{-1}{rounded corners=3ex}
\halfcircle{.5}{2.5}{0}{-1}{rounded corners=3ex,dashed}
\halfcircle{3.5}{5.5}{0}{-1}{rounded corners=3ex}
\halfcircle{6.5}{8.5}{0}{-1}{rounded corners=3ex,dashed}
\halfcircle{-7.5}{-4.5}{-1}{-2}{rounded corners=3ex} 
\halfcircle{-6}{-1.5}{-2}{-3}{rounded corners=3ex} \draw (-1.5,-1) -- (-1.5,-2);
\draw (-3.625,-3) -- (-3.625,-4);
\draw (4.5,-1) -- (4.5,-4);
\starx{9.5};
\draw [thick,loosely dotted] (0,.3) -- (0,-2);
}\\
&\mylabel{e}&\longmapsto\quad
&\tikz[baseline=-25pt,scale=.5]{
\halfcircle{-9.5}{9.5}{0}{-4}{rounded corners=6ex}
\halfcircle{-8.5}{-6.5}{0}{-.8}{rounded corners=2.4ex}
\halfcircle{-5.5}{-3.5}{0}{-.8}{rounded corners=2.4ex}
\halfcircle{-2.5}{-.5}{0}{-.8}{rounded corners=2.4ex}
\halfcircle{3.5}{5.5}{0}{-.8}{rounded corners=2.4ex}
\halfcircle{-7.5}{-4.5}{-.8}{-1.6}{rounded corners=2.4ex} 
\halfcircle{-6}{-1.5}{-1.6}{-2.4}{rounded corners=2.4ex} \draw (-1.5,-.8) -- (-1.5,-1.6);
\halfcircle{-3.625}{4.5}{-2.4}{-3.2}{rounded corners=2.4ex}
\draw (4.5,-.8) -- (4.5,-2.4);
\draw (0,-3.2) -- (0,-4);
\starx{9.5};
\draw [thick,loosely dotted] (0,.3) -- (0,-2);
}\\
&\mylabel{f}&\longmapsto\quad
&\tikz[baseline=-25pt,scale=.5]{
\halfcircle{-9.5}{9.5}{0}{-4}{rounded corners=6ex}
\halfcircle{-8.5}{-6.5}{0}{-.8}{rounded corners=2.4ex}
\halfcircle{-5.5}{-3.5}{0}{-.8}{rounded corners=2.4ex}
\halfcircle{-2.5}{-.5}{0}{-.8}{rounded corners=2.4ex}
\halfcircle{3.5}{5.5}{0}{-.8}{rounded corners=2.4ex}
\halfcircle{-7.5}{-4.5}{-.8}{-1.6}{rounded corners=2.4ex} 
\halfcircle{-6}{-1.5}{-1.6}{-2.4}{rounded corners=2.4ex} \draw (-1.5,-.8) -- (-1.5,-1.6);
\halfcircle{-3.625}{4.5}{-2.4}{-3.2}{rounded corners=2.4ex}
\draw (4.5,-.8) -- (4.5,-2.4);
\starx{9.5};
\draw [thick,loosely dotted] (0,.3) -- (0,-2);
}
\end{align*}
\caption{The steps for distilling a composite anyon using a single mobile quasiparticle.  (a) Begin with an equal number of prepared anyon pairs to the left and right of a dividing line (dotted).  Some pairs may be trivial (dashed).  However, the mobile quasiparticle, marked $\star$, is promised to be nontrivial.  (b) Begin by adding anyon pairs one at a time using Eq.~\eqnref{e:addition}.  (c) After adding each pair, integrate it with the previously added pairs using Eq.~\eqnref{e:integration}.  (d) Continue for all the pairs on the left side.  Then, separately, add and integrate all the pairs on the right side.  Braids around a trivial particle have no effect.  (e) Integrate once more across the left/right partition.  (f) Finally, apply the inverse of Eq.~\eqnref{e:addition} to disentangle the mobile quasiparticle from the others.  Provided there was at least one nontrivial anyon pair on both sides, this results in a nontrivial composite anyon crossing the partition.}  \label{f:onemobiledistillation}
\end{figure}

It remains to show how to implement Eqs.~\eqnref{e:addition} and~\eqnref{e:integration}.  By density for pure braids, we can achieve both of these maps asymptotically, and the Solovay-Kitaev theorem gives an algorithm for constructing better and better approximations.  We will give a systematic construction that converges more rapidly than the guarantee provided by the Solovay-Kitaev theorem.  

For a $2 \times 2$ matrix $M_0$ (to be determined), define matrix products $M_1, M_2, \ldots$ inductively by 
\begin{equation} \label{e:Xsequence}
M_j = M_{j-1} R^{-1} M_{j-1}^\dagger R^3 M_{j-1} R^{-3} M_{j-1}^\dagger R M_{j-1}
 \enspace .
\end{equation}
By Eq.~\eqnref{e:Xconverge}, 
\begin{equation}
\abs{\bra 0 M_j \ket 0} = \abs{\bra 0 M_{j-1} \ket 0}^5 = \cdots = \abs{\bra 0 M_0 \ket 0}^{5^j}
 \enspace .
\end{equation}

For a $2 \times 2$ matrix $N_0$ (to be determined), define matrix products $N_1, N_2, \ldots$ inductively by 
\begin{equation} \label{e:Isequence}
N_j = N_{j-1} R N_{j-1}^\dagger R^3 N_{j-1} R^3 N_{j-1}^\dagger R N_{j-1}
 \enspace .
\end{equation}
By Eq.~\eqnref{e:Iconverge}, 
\begin{equation}
\abs{\bra 1 N_j \ket 0} = \abs{\bra 1 N_{j-1} \ket 0}^5 = \cdots = \abs{\bra 1 N_0 \ket 0}^{5^j}
 \enspace .
\end{equation}

Thus the matrices $M_j$ converge doubly exponentially fast to $\fastmatrix{0&1\\1&0}$, up to phases on the two basis states, while $N_j$ converges doubly exponentially fast to the identity, up to phases.  These are the matrices needed for Eqs.~\eqnref{e:addition} and~\eqnref{e:integration}, respectively, acting from the middle edge of the left diagram, $0$ or $1$, to the middle edge of the right diagram.  

There are two problems.  First, these matrices need to be implemented using braids of the mobile quasiparticle.  Second, we must ensure that the braid sequence ends up in the basis on the right-hand side of Eqs.~\eqnref{e:addition} and~\eqnref{e:integration}, with the mobile quasiparticle in the right-most position.  

\def\smallcoords{\def\Ax{-.9} \def\Bx{-.3} \def\Cx{.3} \def\Dx{.9} \def\y{-.8} \def\yy{-.4} \def\starA{\node at (\Ax,.15) {$\star$};} \def\starB{\node at (\Bx,.15) {$\star$};} \def\starC{\node at (\Cx,.15) {$\star$};} \def\starD{\node at (\Dx,.15) {$\star$};} }
\def\smallbasisone{\smallcoords\
\draw (\Ax,0) -- (\Ax,\y) -- (\Bx,\y); \draw [very thick] (\Bx,\y) -- (\Cx,\y); \draw (\Cx,\y) -- (\Dx,\y) -- (\Dx,0);\
\draw (\Bx,0) -- (\Bx,\y); \draw (\Cx,0) -- (\Cx,\y);}
\def\smallbasistwo{\smallcoords\
\draw (\Ax,0) -- (\Ax,\y) -- (\Dx,\y) -- (\Dx,0);\
\draw (\Bx,0) -- (\Bx,\yy) -- (\Cx,\yy) -- (\Cx,0); \draw [very thick] (0,\yy) -- (0,\y);}

\def\shorttikz#1{\tikz[baseline=-10pt,rounded corners=2ex]{#1}}

The first problem is straightforward to solve given braid implementations of $M_0$ and $N_0$.  Assume that we are currently in one of the fusion bases 
\begin{equation} \label{e:twobases}
\shorttikz{\smallbasisone} \quad\text{or}\quad \shorttikz{\smallbasistwo} \enspace .
\end{equation}
In each diagram, the middle, bold edge can be either $0$ or $1$, i.e., absent or present.  Then regardless of the position of the mobile quasiparticle, an application of the $2 \times 2$ matrix $R$ can be implemented on the middle edge with a certain braid.  In the following four cases, a counterclockwise exchange of the mobile quasiparticle with its nearest neighbor in the fusion diagram implements $R$.  
\begin{equation}\begin{split}
\shorttikz{\smallbasisone \starC} &\quad\longleftrightarrow\quad \shorttikz{\smallbasisone \starD} \\
\shorttikz{\smallbasistwo \starB} &\quad\longleftrightarrow\quad \shorttikz{\smallbasistwo \starC}
\end{split}\end{equation}
In the last two cases, a counterclockwise braid of the mobile quasiparticle about the middle edge 
\begin{equation}
\shorttikz{\smallbasisone \starB}\quad\longleftrightarrow\quad \shorttikz{\smallbasistwo \starD}
\end{equation}
implements the $2 \times 2$ matrix $\fastmatrix{1&0\\0&e^{3\pi i/5}}$ on the middle edge.  The inverse, clockwise braid therefore implements $e^{4 \pi i/5} R$, i.e., $R$ up to an irrelevant global phase.  It is important to notice that none of these operations require weaving the mobile quasiparticle around the leftmost quasiparticle; the leftmost quasiparticle can be far away or even part of the environment, making such a weave expensive or impossible.  

The second problem, though, is that some applications of $R$ change the bases between the two possibilities in Eq.~\eqnref{e:twobases}, and every application changes the position of the mobile quasiparticle.  This means that after implementing $M_j$ or $N_j$ from Eqs.~\eqnref{e:Xsequence} and~\eqnref{e:Isequence}, the basis and the position of the mobile quasiparticle might be incorrect.  Notice, moreover, that in Eqs.~\eqnref{e:Xsequence} and~\eqnref{e:Isequence}, all powers of $R$ are odd, meaning that intermediate braids are certainly not pure.  From Eq.~\eqnref{e:Rdef}, $R$ applies a relative phase between $0$ and $1$ of $e^{-3\pi i/5}$, which is a primitive tenth root of unity.  $R^2$ on the other hand applies a relative phase that is a fifth root of unity, so even powers of $R$ allow a strictly smaller set of relative phases to be applied than do odd powers.  It does not appear to be possible to get the same convergence speed using even powers of the $R$ matrix.  

Note from Eqs.~\eqnref{e:Xsequence} and~\eqnref{e:Isequence} that if $M_0$ and $N_0$ each consist of alternating applications of $F$ and odd powers of $R$, beginning and ending with $F$, then $M_j$ and $N_j$ will have the same form for all~$j$.  Moreover, if $M_0$ and $N_0$ include $m_0$ and $n_0$ $F$ terms, respectively, then $M_j$ and $N_j$ include $5^j m_0$ and $5^j n_0$ $F$ terms, respectively.  

Consider an alternating matrix product $F \ldots F R^{\alpha_2} F R^{\alpha_1} F$, with the $\alpha_i$ odd integers.  By implementing this product on the middle edge by braiding the mobile quasiparticle, the initial basis $\raisebox{.3em}{\scalebox{.5}{\shorttikz{\smallbasistwo \starD}}}$ transforms as 
\begin{equation}
\begin{tabular}{l c l c l}
\shorttikz{\smallbasistwo \starD}
&$\overset{F}{\longrightarrow}$&
\shorttikz{\smallbasisone \starD}
&$\overset{R^{\alpha_1}}{\longrightarrow}$&
\shorttikz{\smallbasisone \starC} \\
&$\overset{F}{\longrightarrow}$&
\shorttikz{\smallbasistwo \starC}
&$\overset{R^{\alpha_2}}{\longrightarrow}$&
\shorttikz{\smallbasistwo \starB} \\
&$\overset{F}{\longrightarrow}$&
\shorttikz{\smallbasisone \starB}
&$\overset{R^{\alpha_3}}{\longrightarrow}$&
\shorttikz{\smallbasistwo \starD}
\end{tabular}\end{equation}
after which the sequence repeats.  Therefore, provided that the number of $F$ terms in the product is a multiple of three, the basis returns to the beginning after one final application of $R$.  
Similarly, the initial basis $\raisebox{.3em}{\scalebox{.5}{\shorttikz{\smallbasisone \starD}}}$ transforms as 
\begin{equation}
\begin{tabular}{l c l c l}
\shorttikz{\smallbasisone \starD}
&$\overset{F}{\longrightarrow}$&
\shorttikz{\smallbasistwo \starD}
&$\overset{R^{\alpha_1}}{\longrightarrow}$&
\shorttikz{\smallbasisone \starB} \\
&$\overset{F}{\longrightarrow}$&
\shorttikz{\smallbasistwo \starB}
&$\overset{R^{\alpha_2}}{\longrightarrow}$&
\shorttikz{\smallbasistwo \starC} \\
&$\overset{F}{\longrightarrow}$&
\shorttikz{\smallbasisone \starC}
&$\overset{R^{\alpha_3}}{\longrightarrow}$&
\shorttikz{\smallbasisone \starD}
\end{tabular}\end{equation}
after which the sequence repeats.  Therefore, provided that the number of $F$ terms in the product is $1 \mod 3$, the basis finishes as $\raisebox{.3em}{\scalebox{.5}{\shorttikz{\smallbasistwo \starD}}}$.  

We deduce that by setting $M_0 = F R^{-1} F R F$, so $m_j = 0 \pmod 3$, $M_j$ will converge doubly exponentially fast to $\fastmatrix{0&1\\1&0}$ up to phases---$\abs{\bra 0 M_j \ket 0} = 1/\tau^{2\cdot 5^j}$---and its implementation will finish in the correct basis position, therefore asymptotically implementing Eq.~\eqnref{e:addition}.  (We have chosen to start with $F R^{-1} F R F$ since $\bra 0 F R F R F \ket 0 = 1$.)  
By setting $N_0 = F$, $N_j$ will converge doubly exponentially fast to the identity up to phases---$\abs{\bra 1 N_j \ket 0} = 1/\sqrt \tau^{5^j}$---and its implementation will finish in the correct basis position for $j$ even, therefore asymptotically implementing Eq.~\eqnref{e:integration}.  In either case, the length of the braid sequence is $O(5^j)$; hence achieving error~$\epsilon$ requires $O(\log \frac 1 \epsilon)$ braids.  The braid sequences for $M_1$ and $N_1$ are given explicitly in \figref{f:braidsequences}.  

\begin{figure}
\centering
\begin{tabular}{c c}
\subfigure[]{\raisebox{.2cm}{\includegraphics[scale=.65]{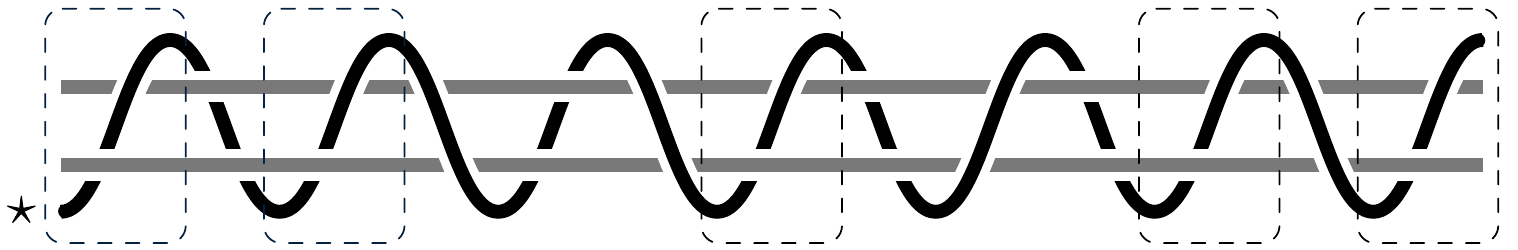}}} & 
\subfigure[]{\raisebox{.25cm}{\includegraphics[scale=.65]{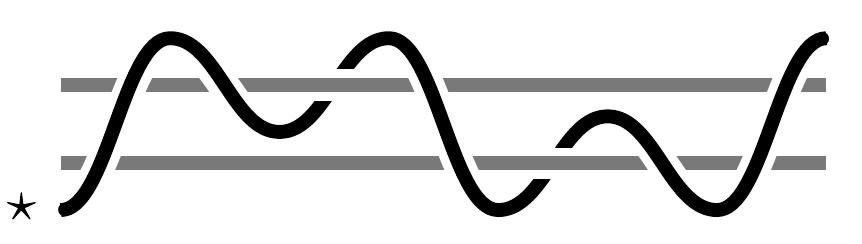}}}
\end{tabular}
\caption{(a) Braid sequence for $M_1$; the highlighted areas correspond to $M_0 = M_0^\dagger$.  Note that some adjacent braids cancel, allowing a modest simplification.  Two final braids, to implement either $R$ or~$R^{-1}$, will return the mobile quasiparticle to its initial position.  (b) Braid sequence for $N_1$.} \label{f:braidsequences}
\end{figure}

Therefore, to create a composite anyon except with probability $\epsilon$, it suffices to use $n = \Theta(\frac 1 p \log \frac 1 \epsilon)$ anyon pairs and $O(n^2 \log \frac n \epsilon) = O\big(\frac 1 {p^2} (\log \frac 1 \epsilon)^3\big)$ total braids, assuming $p = \Omega(\epsilon \log \frac 1 \epsilon)$.  This matches the asymptotics found in \secref{s:asymptoticsallmobile}.  In practice, though, it should require about twice as many anyon pairs as the earlier scheme, since not one but two nontrivial anyon pairs are needed for this catalyzed distillation scheme to succeed.

\section{Conclusion} \label{s:conclusion}

The cooling and initialization of a topological quantum computer present interesting challenges.  In this article, we have improved the efficiency of implementing certain gates useful for distilling composite anyons, in a model with limited quasiparticle mobility.  Instead of relying on the Solovay-Kitaev theorem, the gate compilation scheme is based on an explicit recursion, developed from a generalization of Grover's convergent search algorithm.  

One natural question is whether this systematic method of compiling gates can be extended to a larger gate set, yielding faster convergence more generically~\cite{HarrowRechtChuang01SolovayKitaev}.  We have only shown how to implement the identity gate (between different bases) and the NOT gate, up to phases.  Of course, as we use pure braids, an implementation of the CNOT gate up to phases immediately follows.  

A second question is whether this approach, or composite anyon distillation more generally, can be extended to more anyon models.  It appears likely that composite anyon distillation can be implemented for the $SO(3)_k$ theory, provided that $k+2$ is prime or $k = 7$.  A first step is to generalize the Solovay-Kitaev theorem to work in multiple sectors simultaneously.  It is known that if $\rho$ and $\sigma$ are unitary group representations of different dimensions, that are each separately dense, then the representation $\rho \oplus \sigma$ is also dense~\cite[Lemma~4.2]{AharonovArad06Jones}.  Using that the outer automorphism group of $SU(n)$ has at most two elements, the lemma can be extended to the case that $\rho$ and $\sigma$ have the same dimension.  The condition $k+2$ prime or $k = 7$ then arises from asking that the braid representations with different boundary conditions be dense and inequivalent.  Composite anyon distillation for $k = 7$ would extend the Turaev-Viro invariant BQP-completeness result of~\cite{AlagicJordanKoenigReichardt10tv}.  However, it is more practical to study the Ising model, $SU(2)_2$.   One case of interest is the model on a surface with nontrivial topology but with only an imperfect ability to distinguish~$0$ from $1$.  Another case to consider is the plane with noisy non-Clifford gates; can magic states distillation~\cite{BravyiKitaev04magic, Reichardt06magic, Reichardt04magic} be combined with composite anyon distillation?  

As mentioned in the introduction, one practical concern for schemes including ours is the possible presence of stray anyons trapped at surface defects.  Such anyons are particularly a problem for a measurement-based scheme where the interferometry regions change, or for any scheme that combines measurements with braiding.  The issue is that different operations may involve different subsets of anyons that should be treated collectively.  We will briefly sketch one possible solution.  Assume that we are granted a mobile anyon and one interferometer.  Assume moreover that the total charge in the interferometry region is nontrivial when the mobile anyon is outside, and is trivial when the anyon is moved inside.  (Provided that the mobile anyon is in fact nontrivial, the trivial measurement outcome can be \emph{forced}~\cite{BondersonFreedmanNayak08measurementonlytqc} by repeatedly measuring with the mobile anyon outside and inside the interferometer.)  Then there must be a pair of nontrivial anyons that fuse to the vacuum, one half of which is mobile and the other half trapped within the interferometer.  The mobile anyon can then be used to measure anyons in other regions of the device by braiding it around the other region, returning it to the interferometer and measuring the total charge.  With repetition, this measurement can reach arbitrarily high accuracy.  Static stray anyons do not pose a problem provided that the mobile anyon is moved along the same routes, away from any strays.  After initializing enough nontrivial anyon pairs in this way, universal quantum computation can be simulated following~\cite{SimonBonesteelFreedmanPetrovicHormozi05onemobile}.

\subsection*{Acknowledgements}

I gratefully acknowledge helpful discussions with Todd Brun, Aram Harrow, Robert K{\"o}nig and Vincent Nesme.  Research conducted at the Institute for Quantum Computing, University of Waterloo, supported by NSERC and ARO, and at the Kavli Institute for Theoretical Physics, supported by NSF grant PHY05-51164.

\bibliographystyle{alpha-eprint}
\bibliography{systematic}

\end{document}